\theoremstyle{plain}
\newtheorem{theorem}{Theorem}
\newcommand{\todo}[1]{{\color{red}{TODO: #1}}}
\newcommand{\saman}[1]{{\color{orange}{Saman: #1}}}
\newcommand{\todo}[1]{}
\newcommand{\saman}[1]{}
\title{Eliminating Hallucination-Induced Errors in LLM Code Generation with Functional Clustering}
\author{
  Chaitanya Ravuri \\
  Massachusetts Institute of Technology \\
  \texttt{cravuri@mit.edu} \\
  \And
  Saman Amarasinghe \\
  Massachusetts Institute of Technology \\
  \texttt{samana@mit.edu} \\ 
}
\begin{document}

\maketitle

\begin{abstract}
  Modern code–generation LLMs can already solve a large fraction of programming problems, yet they still hallucinate subtle bugs that make their outputs unsafe for autonomous deployment.  We present \emph{functional clustering}, a black-box wrapper that eliminates nearly all hallucination-induced errors while providing a tunable confidence score. The wrapper samples many candidate programs, executes each on a self-generated test suite, and clusters candidates whose I/O behavior is identical; the empirical mass of the largest cluster serves as an exact confidence estimate. A single scalar threshold on this estimate lets users trade coverage for reliability with exponential guarantees. On \textsc{LiveCodeBench} our verifier preserves baseline pass@1 on solvable tasks yet slashes the error rate of returned answers from $\sim$65\% to 2\%, and drives it to 0\% at a conservative threshold while still answering 15.6\% of prompts. Manual audits show that the few residual mistakes stem from prompt misinterpretation, not random generation noise, narrowing future work to specification clarity. Because the method requires only sampling and sandbox execution, it applies unchanged to closed-source APIs and future models, offering a practical path toward dependable, autonomous code generation. Our code is available on Github (\href{https://github.com/20ChaituR/functional-clustering}{this https URL}).
\end{abstract}

\section{Introduction}
\label{sec:intro}


Large language models (LLMs) have begun to write code at a level once thought exclusive to experienced engineers: they pass university exams \cite{nezhadllmperformance}, solve competitive‑programming problems \cite{openai2025competitiveprogramminglargereasoning}, and even draft patches that compile in production repositories \cite{taomagis}. Despite this progress, practitioners hesitate to deploy them unsupervised \cite{zhao2025surveylargelanguagemodels}. A single off‑by‑one error or mistyped logical operator can crash an application or leak sensitive data, and state‑of‑the‑art models still hallucinate such bugs with disconcerting frequency \cite{Liu}.

Most recent work therefore augments LLMs with confidence scores \cite{xiong2024llmsexpressuncertaintyempirical}. Token‑level likelihoods \cite{ma2025estimatingllmuncertaintylogits}, calibrated logits \cite{desai-durrett-2020-calibration}, and semantic‑embedding clustering \cite{qiu2024semanticdensityuncertaintyquantification} each offer partial signals, yet all miss a fundamental aspect of code. Two snippets that differ syntactically--or even occupy completely separate regions of embedding space--may be functionally identical. At the same time, semantically similar code with just single-character edits (such changing a $<$ to $\le$) flips correctness without appreciably moving the embedding. Treating sequences or embeddings as the unit of analysis necessarily conflates these cases.

Our key observation is that software engineering already supplies an exact, model‑agnostic criterion for identity: behavior on test inputs. In practice, virtually every production program is accepted not because it is formally proved but because it passes a finite test suite.  We leverage the same idea at generation time. For each coding prompt the LLM produces a bag of candidate programs; the same model is then prompted to generate a diverse set of inputs (only inputs—no labeled outputs).  Executing every program on every input in a sandbox yields an output vector; candidates whose vectors match exactly are placed in the same functional equivalence class. The empirical probability that a random sample belongs to the largest class, denoted $\rho$, becomes an \emph{exact} confidence estimate.

This test‑based clustering confers three immediate advantages.  
First, it aligns the score with the real goal--correctness of behavior--rather than proxies such as token probability or embedding distance. If the model assigns \(30\%\) probability to two syntactically different but equivalent implementations, $\rho$ correctly reports \(60\%\) confidence, whereas logit‑based metrics would cap the score at \(30\%\).  
Second, the approach is entirely black‑box: it requires only sampling and sandbox execution, not internal logits, gradient access, or fine‑tuning. Consequently it wraps closed‑source APIs as easily as open‑weight models. Third, the statistic separates failure modes. Random‑chance hallucinations fall exponentially fast with sample size; the rare high‑confidence errors we observe stem exclusively from prompt misinterpretations. By isolating misunderstandings, the method turns reliability into a research problem that is both narrower and more tractable.


\paragraph{Contributions}
\begin{enumerate}
  \item We frame uncertainty in code generation as density estimation over functional equivalence classes--programs that produce identical outputs on an automatically‑generated test suite. This shifts the problem from approximate semantic similarity to exact behavioral identity, giving a mathematically crisp confidence signal without learned similarity models or log‑probabilities.

  \item The method requires only (i) sampling multiple programs and (ii) running them in a sandbox. It attaches to any code LLM, including closed APIs, without fine‑tuning, auxiliary classifiers, or access to internal logits. It does not require any additional knowledge beyond the prompt, as it generates its own test data.

  \item By prompting the base model for inputs only (not labeled test cases), we create a task‑agnostic verifier that introduces no additional source of hallucination and zero human overhead.

  \item A universal threshold on dominant‑cluster mass converts the LLM into a selective coder that either returns a representative program or abstains, giving a very high confidence of producing a correct solution when responding.
\end{enumerate}

In short, functional clustering converts noisy LLM outputs into a confidence--based response, bringing dependable, autonomous code generation within reach. Empirically, wrapping a modern code LLM with our verifier preserves baseline pass@1 on solvable benchmark tasks while slashing the error rate of returned answers from roughly \(65\%\) to \(2\%\). Raising the acceptance threshold drives residual error to zero at the cost of additional abstentions, giving users an explicit accuracy–coverage knob.  Manual inspection confirms that remaining failures are all specification misunderstandings—no random hallucinations survive.

\section{Related Work}

Hallucinations in LLMs have attracted considerable attention, especially in precision-critical domains like code generation. Current research seeks to understand and classify these phenomena, mitigate their occurrence, and develop robust evaluation benchmarks. Below, we survey key directions in this area, highlighting gaps that our approach aims to address.

Hallucinations are outputs that read plausibly yet are semantically wrong. Huang et al. \cite{Huang} and Islam et al. \cite{Islam} organize natural-language hallucinations into factual, logical, and stylistic errors and catalog mitigation recipes across prompting, decoding and post-hoc verification.  Code introduces new failure modes: HalluCode \cite{Liu} shows that models mismap argument roles, misname identifiers and mismanage resources even when they compile, while CodeHalu \cite{Tian} adds an execution-based benchmark that separates mapping, naming, resource and logic slips, demonstrating that functional correctness and intent fidelity remain brittle despite high surface fluency.

Multiple methods attempt to avoid these errors. One proposed method is to let the LLM grade its own answer. Kadavath et al. \cite{Kadavath} prompt GPT-3 to output a verbal probability “P(True)” and find that the statement is better calibrated than raw token likelihoods, while Tian et al. \cite{TianCalibrate} attach a value head fine-tuned with RLHF to supply confidence scores; both approaches improve in-domain calibration but generalize poorly to unseen tasks according to the broad empirical audit of Xiong et al. \cite{xiong2024llmsexpressuncertaintyempirical}.

A second method calibrates the token distribution itself  Desai and Durrett \cite{desai-durrett-2020-calibration} show that simple temperature scaling tightens expected calibration error for BERT-style classifiers, and Ma et al. \cite{ma2025estimatingllmuncertaintylogits} extend the idea to text generators by analyzing logit dispersion. Because token-based scores regard every distinct sequence as a different outcome, they must apportion probability mass across all syntactic variants of the same behaviour. If two programs that return identical outputs appear with 0.3 likelihood each, the model reports only 0.3 confidence for either one instead of 0.6 for the underlying solution. Thresholds or calibration curves built on that deflated score will wrongly flag many correct answers as “low-confidence,” forcing users to accept lower coverage or loosen the threshold and let more real errors slip through.

Supervised confidence heads trade label cost for sharper selectivity. Lin et al. \cite{LinTeaching} fine-tune GPT-3 to express its uncertainty in words; Liu et al. \cite{LiuLitCab} train a lightweight linear adapter over the last hidden layer to predict a bias term that rescales logits. Though fine-tuned heads produce sharp confidence curves where they are trained, each new domain or model demands fresh annotated triples of \textit{prompt, candidate code, correctness}. Collecting those labels is costly, often impossible for proprietary APIs, and must be repeated whenever the base model is updated—so the approach cannot serve as a drop-in wrapper across tasks or providers.

Embedding-based clustering targets semantic rather than lexical similarity.  Farquhar et al. \cite{Farquhar} introduce semantic entropy: they embed multiple generations, cluster by cosine similarity, and measure entropy to flag uncertainty, while Kuhn and Gal \cite{KuhnSemanticUncertainty} and Qiu and Miikkulainen \cite{qiu2024semanticdensityuncertaintyquantification} refine the notion with invariance-aware kernels and density estimates.  For natural language these groupings align with meaning, but for code the mapping from embedding space to behavior is loose. In code, moving from \texttt{<} to \texttt{<=}, a one-token change that may leave the embedding almost unchanged, can flip every test outcome, while two implementations with totally different variable names and control flow may sit far apart in the vector space yet return identical results. Relying on that geometry therefore blurs the line between correct and buggy solutions.

A recent code-specific variant by Sharma and David \cite{sharma2025assessingcorrectnessllmbasedcode} executes symbolic traces to decide whether two programs match, then applies an entropy measure over trace clusters. Symbolic traces compare the exact sequence of states a program visits, so any change in control flow—a loop unrolled one extra time, a recursive call replaced by iteration—yields a different trace even if the outputs on all inputs are identical. As a result, probability mass that ought to accumulate on one correct solution is scattered across several trace “modes,” lowering the reported confidence for each and again driving overly cautious abstention thresholds.

Selective-answer frameworks build reliability guarantees on top of any base score. Abbasi Yadkori et al. \cite{Yadkori} adapt conformal prediction to language models, proving error-rate bounds when the model abstains below a threshold, and Ye et al. \cite{YeBenchmarkingUQ} benchmark twenty UQ metrics across tasks, confirming that abstention policies can dominate naive always-answer baselines if the underlying score is well-aligned with correctness.

Despite this progress no existing signal is simultaneously behavioral, exact and black-box. Our functional-clustering verifier fills the gap by treating “produces identical outputs on an automatically generated test suite” as the atomic equivalence relation, merging syntactically diverse but behaviorally identical implementations and concentrating probability mass where it belongs. Because it needs only sampling and sandbox execution, the method plugs directly into the abstention frameworks above and wraps either open-weight or proprietary models without retraining or logit access.

To verify that functional clustering delivers practical reliability, we test it on two widely adopted benchmarks that cover complementary slices of the code-generation landscape. HumanEval \cite{HumanEval} consists of 164 short Python functions, each equipped with hidden unit tests that probe corner cases; because every prompt is self-contained and free from training contamination, pass@k here isolates the model’s raw reasoning and synthesis skills and reveals how much accuracy is recovered once low-confidence answers are filtered out. LiveCodeBench \cite{LiveCodeBench} pushes the same models into longer, resource-bound problems executed in sandboxes under real-time and memory limits; tasks require reading from \texttt{stdin}, writing to \texttt{stdout}, and handling large inputs, so they approximate continuous-integration pipelines and competitive-programming judges. Success on HumanEval shows that the verifier preserves baseline coverage on micro-level logic, while gains on LiveCodeBench demonstrate that the same mechanism scales to production-style constraints where a single hidden bug can lead to an incorrect solution.

\section{Methodology}\label{sec:methodology}

This section lays out the full technical scaffold behind our verifier. We begin by formalizing the task and the notion of \emph{functional equivalence} (Section \ref{sec:prob-statement}). We then derive an exact—but usually intractable—confidence metric based on the probability mass of an equivalence class and show how it can be estimated from finite samples (Section \ref{sec:fcp}). Next, we replace undecidable equivalence with a test‑based proxy and analyze the resulting error bounds (Section \ref{sec:test-oracle}). Finally, we assemble these pieces into a practical inference algorithm and discuss its computational profile (Section \ref{sec:algorithm}).

\subsection{Problem statement}\label{sec:prob-statement}

Let $\mathcal X$ and $\mathcal Y$ denote the input and output domains of a programming task described by natural‑language prompt $\mathbf p$. Conditioned on $\mathbf p$, an auto-regressive LLM defines a predictive distribution $\mathbb P$ over syntactically valid programs $\phi\in\mathcal P$. For a concrete program $\phi\in\mathcal P$ its semantics is the
deterministic function $f_\phi:\mathcal X\to\mathcal Y$.

Two programs are \textit{functionally equivalent} when they return the same output on every possible input:
\begin{equation}\label{eq:equiv}
\phi_1 \equiv \phi_2
\;\Longleftrightarrow\;
\forall x\in\mathcal X,\;
f_{\phi_1}(x)=f_{\phi_2}(x).
\end{equation}
Throughout the paper we assume that \emph{each task has a single correct output for any given input}. That is, if two programs differ in their behavior on even one test case they cannot both be fully correct. Some benchmark problems violate this assumption--for example, tasks that accept any permutation, any tie-breaking order, or any string that matches a regular expression. We treat such tasks as out of scope, but note that a similar method as ours could be used to determine equivalency in such problems.

Our aim is to (i) estimate the probability mass that $\mathbb P$ assigns to the equivalence class of the program we ultimately return and (ii) output that program only when the mass exceeds a user‑chosen threshold $\tau\in(0,1]$; otherwise we abstain. Equivalence classes are defined by Eq. \ref{eq:equiv}. The intuition is that genuine solutions tend to coalesce: there are many syntactic ways to implement the same correct algorithm, so correct programs accumulate probability in a shared equivalence class. By contrast, hallucinated errors are essentially random and therefore split their probability mass across many small, disjoint classes. A large observed class is thus strong evidence of correctness, whereas a small class signals either rarity or error.

Because universal equivalence is undecidable, we approximate it with agreement on a finite, automatically generated test set and show that the resulting estimator inherits exponential reliability guarantees.

\subsection{Empirical confidence}\label{sec:fcp}

The exact confidence of $\hat\phi$ is
\begin{equation}\label{eq:conf}
C(\hat\phi)\;=\;\Pr_{\Phi\sim\mathbb P} \bigl[\Phi\equiv\hat\phi\bigr],
\end{equation}
i.e. the total probability mass of its equivalence class. Calculating $C$ according to Eq. \ref{eq:conf} is typically intractable, so we resort to Monte‑Carlo:
\begin{equation}\label{eq:monte-carlo}
\hat C_n(\hat\phi)
= \frac1n\sum_{i=1}^{n}\mathbf 1\!\bigl[\Phi_i\equiv\hat\phi\bigr],
\qquad
\Phi_i\stackrel{\text{i.i.d.}}{\sim}\mathbb P.
\end{equation}
The indicator in Eq. \ref{eq:monte-carlo} is Bernoulli with mean $C(\hat\phi)$, hence $\hat C_n$ is unbiased and its variance scales as $C(1-C)/n$.

\subsection{Abstention rule and its reliability}
\label{sec:abstention}

We \textsc{answer} when $\hat C_n(\hat\phi)\ge\tau$ and \textsc{abstain} otherwise. The probability of a (potentially harmful) false acceptance is
\begin{equation}\label{eq:prob-harmful}
\Pr[\hat C_n\ge\tau\mid C]
\;=
\sum_{k=\lceil n\tau\rceil}^{n}\binom{n}{k}C^{k}(1-C)^{n-k}.
\end{equation}
When $C<\tau$, the tail in Eq. \ref{eq:prob-harmful} can be bounded with exponential tightness:
\begin{equation}\label{eq:chernoff}
\Pr[\hat C_n\ge\tau\mid C]
\;\le\;
\exp\!\bigl[-n\,D_{\mathrm{KL}}(\tau\|C)\bigr],
\end{equation}
where $D_{\mathrm{KL}}$ is the binary KL divergence. The quantity $D_{\mathrm{KL}}(\tau\|C)$ measures the per‑sample information cost of confounding a Bernoulli distribution of mean $C$
with one of mean $\tau$. Whenever $\tau\neq C$ this cost is strictly positive, so the error probability decays exponentially in $n$. Doubling the sample size approximately squares the bound, providing a simple knob for practitioners to dial target reliability. The proof of this bound is given in Appendix \ref{app:chernoff-proof}.

In our experiments we take $n=100$. With a moderate gap $\tau-C=0.2$ the divergence is $D_{\mathrm{KL}}(\tau\|C)\approx 0.14$, so $nD_{\mathrm{KL}}\approx 14$ and Eq. \ref{eq:chernoff} upper‑bounds the false‑accept probability by $e^{-14}\approx10^{-6}$. This level of assurance is sufficient for practical deployment; larger $n$ or a wider gap further suppresses risk at the cost of additional LLM queries. Thus, the abstention rule converts the intrinsic randomness of LLM sampling into a rigorously quantified, exponentially small chance of returning a low-confidence solution.

\subsection{Practical equivalence via testing}\label{sec:test-oracle}

Exact functional equivalence is coNP‑hard in general, and undecidable when programs may not halt.  Mirroring real-world software developmental practice, we use a test‑based oracle. Let $\mathcal S=\{x_1,\dots,x_m\}$ be a set of test inputs drawn from a task‑dependent distribution $\mathcal D$. We declare
\begin{equation}\label{eq:test-equiv}
\phi_1\equiv_{\mathcal S}\phi_2
\quad\Longleftrightarrow\quad
f_{\phi_1}(x_j)=f_{\phi_2}(x_j)\;\text{ for all }j.
\end{equation}
Suppose two programs disagree on a measurable subset
\(\mathcal B\subseteq\mathcal X\) with \(\mathcal D(\mathcal B)=\delta>0\). The oracle defined in Eq. \ref{eq:test-equiv} only fails to expose this difference precisely when none of the test inputs $\mathcal S$ are in $\mathcal B$. The probability of that occurring is
\begin{equation}\label{eq:test-prob}
\Pr[\phi_1\equiv_{\mathcal S}\phi_2]
\;=\;
(1-\delta)^{m}.
\end{equation}
Thus, with only hundreds of random tests, the probability in Eq. \ref{eq:test-prob} is exponentially unlikely unless the behavioral divergence itself is vanishingly small. In practice most real defects (off‑by‑one errors, edge‑case branches, etc.) affect a sizable slice of the input space, so such a guarantee suffices. The outer Chernoff exponent in $n$ from Eq. \ref{eq:chernoff} and the inner detection exponent in $m$ from Eq. \ref{eq:test-prob} compound multiplicatively: increasing either budget tightens overall guarantees without hidden interactions. Consequently our procedure inherits the best of both worlds—\emph{statistical rigor} from concentration inequalities and \emph{engineering practicality} from black‑box testing—while remaining compatible with modern LLM pipelines that already generate large candidate batches for re-ranking or self‑consistency.

\subsection{Function clustering routine}\label{sec:algorithm}

Algorithm \ref{alg:fc} summarizes the complete pipeline executed at inference time.  The routine is parameter‑free except for three user‑visible knobs: the number of candidate programs $n$, the number of automatically generated test inputs $m$, and the acceptance threshold $\tau$.

\begin{algorithm}[t]
\caption{Functional Clustering}\label{alg:fc}
\begin{algorithmic}[1]
\Require task description $\mathbf p$ (string); sample budgets $n$ (programs) and $m$ (inputs)\; threshold $\tau\in(0,1]$
\Ensure \textsc{abstain} or a high‑confidence program $\hat\varphi$

\Statex\textbf{Step 1: Candidate program sampling}
\For{$i \gets 1$ \textbf{to} $n$}
    \State $\varphi_i \gets \textsc{LLMGenerateProgram}(\mathbf p)$
\EndFor

\Statex\textbf{Step 2: Test‑input generation}
\For{$j \gets 1$ \textbf{to} $m$}
    \State $x_j \gets \textsc{LLMGenerateInput}(\mathbf p)$
\EndFor

\Statex\textbf{Step 3: Behavioural execution}
\For{$i \gets 1$ \textbf{to} $n$}
    \For{$j \gets 1$ \textbf{to} $m$}
        \State $o_{ij} \gets \textsc{SandboxRun}(\varphi_i, x_j)$
    \EndFor
    \State $\mathbf o_i \gets (o_{i1},\dots,o_{im})$ \Comment{output vector}
\EndFor

\Statex\textbf{Step 4: Equivalence clustering}
\State Partition $\{\varphi_i\}_{i=1}^n$ into classes $\mathcal C_1,\dots,\mathcal C_K$ where $\mathbf o_a=\mathbf o_b$ iff $a,b\in\mathcal C_k$
\State $s_{\max} \gets \max_k |\mathcal C_k|$

\Statex\textbf{Step 5: Decision}
\If{$s_{\max} \ge \lceil \tau n\rceil$}
    \State \textbf{return} any $\varphi \in \operatorname*{arg\,max}_k |\mathcal C_k|$ \Comment{high‑confidence answer}
\Else
    \State \textbf{return} \textsc{abstain}
\EndIf
\end{algorithmic}
\end{algorithm}

The verifier proceeds in three steps.  An LLM is first queried $n$ times for candidate programs $\varphi_{1:n}$ and $m$ times for test inputs $x_{1:m}$. We then execute every program on every input in a sandbox, recording the output vectors $\mathbf{o}_i=(o_{i1},\dots,o_{im})$. Candidates whose vectors match exactly are grouped; if the largest group contains at least $\tau n$ elements we return any representative (all are functionally equivalent on the tests), otherwise we output \textsc{abstain}.  Exact I/O equality replaces embedding or logit heuristics with a behavioral notion of equivalence.

\paragraph{Computational cost.}  The procedure issues $n+m$ LLM calls, so $T_{\mathrm{LLM}}=\mathcal{O}(n+m)$ dominates wall‑clock time. Local sandbox evaluation performs $nm$ runs, $T_{\mathrm{exec}}=\mathcal{O}(nm)$, but those runs are fast relative to the LLM calls. Memory is linear in $n+m$ because each vector component is constant‑size.

\section{Experiments}
\label{sec:experiments}

We evaluate our verifier on \textsc{HumanEval} and \textsc{LiveCodeBench} using two code LLMs: GPT-4o \cite{openai2024gpt4ocard} and Claude-3-Haiku \cite{Anthropic2024Claude3}. For each task we sample 50 candidate programs from each model (100 total) with chain-of-thought plus code prompting; all prompts are reproduced verbatim in Appendix \ref{app:prompts}. The self-generated test suites are produced by GPT-4o alone. We then cluster the 100 programs by exact I/O behavior and apply the thresholding rules introduced in Section \ref{sec:methodology}. Unless stated otherwise, metrics on \textsc{HumanEval} use only GPT-4o generations, while \textsc{LiveCodeBench} results use the full 100-sample pool.

\subsection{HumanEval}

\begin{figure}
  \centering
  \begin{subfigure}{\textwidth}
    \centering
    \includegraphics[width=13cm]{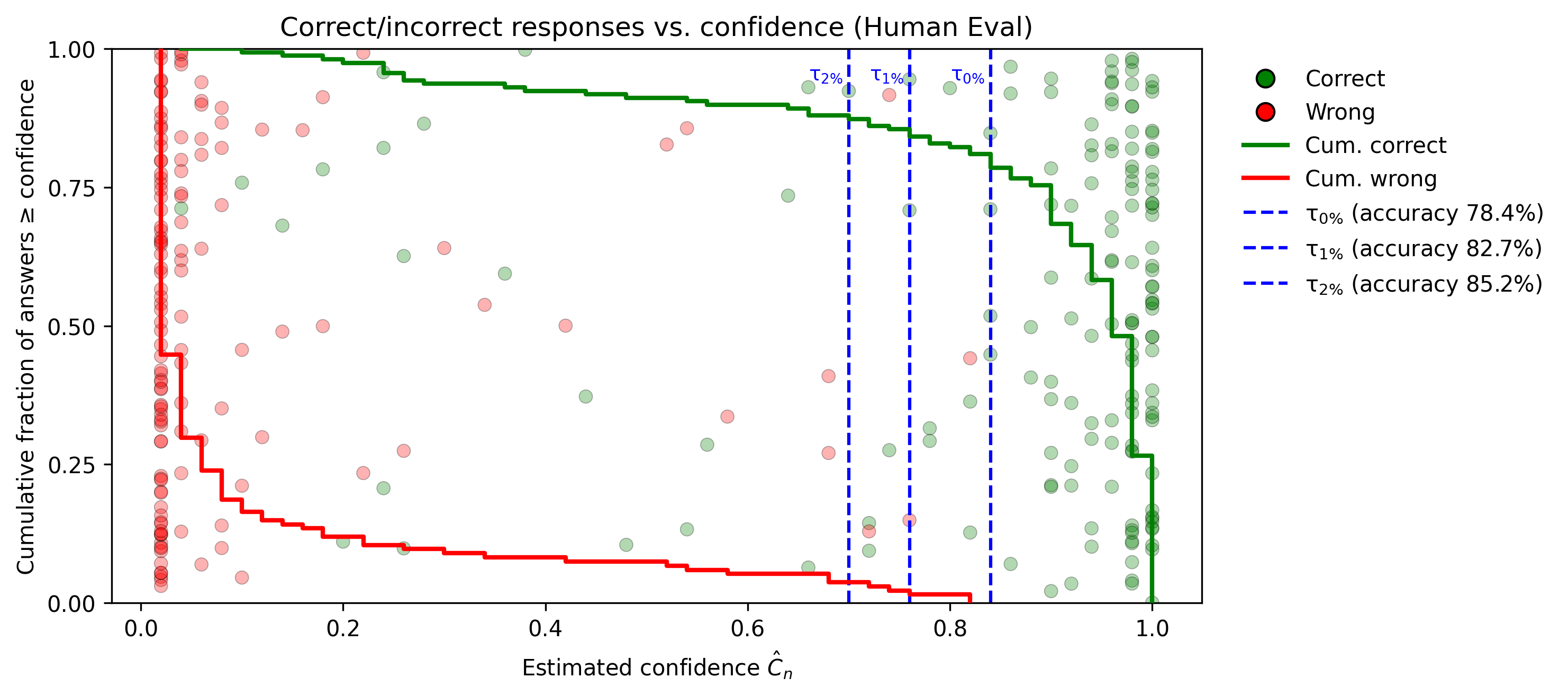}
    \caption{Correctness versus estimated confidence on \textsc{HumanEval}.}
    \label{fig:scatter-humaneval}
  \end{subfigure}
  
  \vspace{0.5cm} 

  \begin{subfigure}{\textwidth}
    \centering
    \includegraphics[width=13cm]{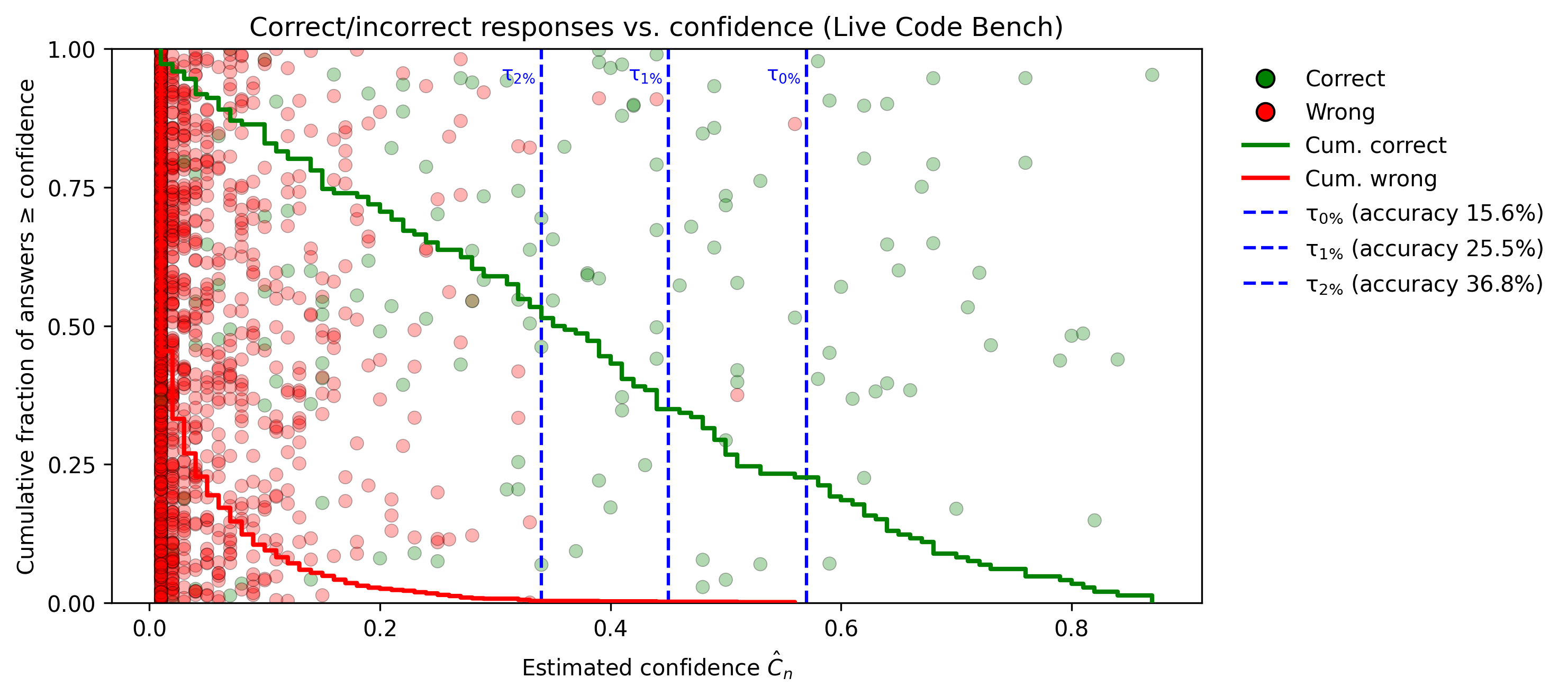}
    \caption{Correctness versus estimated confidence on \textsc{LiveCodeBench}.}
    \label{fig:scatter-lcb}
  \end{subfigure}
  
  \caption{\textbf{Correctness versus estimated confidence across datasets.}  Colors encode correctness; Each point is a response that is either correct or wrong, with its $x$ position denoting the model's confidence in the response, and its $y$ position meaningless. Dashed lines mark confidence thresholds, each guaranteeing an empirical error rate of at most the indicated percentage for returned answers. The Cum. Wrong / Cum. Correct step curves plot the cumulative percentage of incorrect and correct programs whose confidence lies above each point on the $x$-axis.}
  \label{fig:scatter}
\end{figure}

Figure~\ref{fig:scatter-humaneval} plots correctness versus estimated confidence $\hat C_n$ on \textsc{HumanEval}. The two step-shaped traces labeled \textit{Cum. Wrong} and \textit{Cum. Correct} summarize the scatter: at every $x$-axis value they report, respectively, the percentage of incorrect and of correct submissions whose confidence is greater than that value. Reading the curves from right to left therefore shows how residual error (red curve) and retained coverage (green curve) evolve as one lowers a single acceptance threshold. The baseline pass@1 of GPT-4o is $84.9$\%.  With functional clustering we achieve $85.2$\% accuracy at the $\tau_{2\%}$ operating point. Only four tasks lie above the threshold with incorrect code; manual inspection shows that every one of them stems from a prompt misunderstanding rather than a hallucinated bug.

To probe these outliers we rewrote each problematic specification, adding a single clarifying sentence. After the rewrites the model produces high-confidence correct solutions for all four tasks, confirming that the verifier pinpoints specification ambiguity rather than generation noise. Full rewrites are included in Appendix \ref{app:rewrites}.



\subsection{LiveCodeBench}

Figure \ref{fig:scatter-lcb} shows the same analysis for \textsc{LiveCodeBench}. Confidence again cleanly separates regimes: below $\hat C_n=0.34$ wrong answers dominate; between $0.34$ and $0.57$ the accuracy–coverage trade-off follows the exponential tail predicted by Eq. \ref{eq:chernoff}; above $0.57$ we observe no errors.  At $\tau_{2\%}$ the verifier returns $36.8\%$ of tasks with a $2\%$ residual error, while $\tau_{0\%}$ answers $15.6\%$ of tasks with zero observed errors. Together, these observations confirm that a single scalar threshold gives practitioners a transparent knob to dial coverage versus reliability, with the conservative choice $\tau_{0\%}$ guaranteeing zero observed errors in our evaluation.

Table \ref{tab:error-at-expected} generalizes that idea across models. We test on three variants of GPT-4.1 \cite{openai2025gpt41}, in addition to the GPT-4o and Claude-3-Haiku testing we have done previously. For each model we first locate the confidence cut-off that yields its raw expected accuracy, then report the remaining error rate, i.e. the percentage of tasks on which the model both predicts and is wrong. A lower value therefore signals better calibrated abstention at the model’s own accuracy level. GPT-4o is the clear winner: at its expected accuracy of 36.9\% it incurs only a 4.1\% error rate, less than half that of any GPT-4.1 variant and one-third that of Claude-3-Haiku. In short, strong confidence separation is not merely a property of our verifier; it is also a competitive axis on which base models differ.

\begin{table}
\centering
\caption{Error rate when each model is thresholded to the confidence level that achieves its expected accuracy on \textsc{LiveCodeBench}. Lower error rates indicate better calibrated abstention.}
\label{tab:error-at-expected}
\begin{tabular}{lcc}
\toprule
\textbf{Model} & \textbf{Expected accuracy (\%)} & \textbf{Error rate (\%)} \\
\midrule
GPT-4.1-mini   & 67.76 & 11.90 \\
GPT-4.1        & 66.48 & 9.52  \\
GPT-4.1-nano   & 45.95 & 9.52  \\
GPT-4o         & 36.93 & \textbf{4.11}  \\
Claude-3-Haiku & 21.71 & 12.33 \\
\bottomrule
\end{tabular}
\end{table}

\paragraph{Reasons for Error.} For every task whose dominant cluster was still wrong, we manually inspected the generated programs and grouped the failures into five categories:

\begin{enumerate}
  \item[(O)] \textbf{Out-of-scope tasks.}  
        A handful of \textsc{LiveCodeBench} problems permit multiple equally valid outputs (e.g.\ any permutation, either traversal order). Because our verifier presumes a single correct equivalence class, such tasks cannot be resolved by clustering alone.

  \item[(I)] \textbf{Incomplete response.}  
        Trivial syntactic slips, such as omitting a final print, never calling \texttt{main}, or leaving ``\texttt{\# TODO}'' stubs, that a human would not commit.

  \item[(S)] \textbf{Simple one-line mistakes.}  
        The algorithm is conceptually correct but a one-line error (\texttt{<} vs.\ \texttt{<=}, off-by-one, misplaced modulo) breaks edge cases.  
        These are slips a novice programmer might make.

  \item[(HC)] \textbf{Hard mistakes (missing constraint).}  
        The model overlooks a specification detail (array bounds, divisibility,\,\dots) producing a coherent yet wrong algorithm. Many such variants agree with each other, so they can persist into medium-confidence clusters; clarifying examples or rewriting the prompt would likely fix them.

  \item[(HA)] \textbf{Hard mistakes (wrong algorithm).}  
        The chosen approach itself is invalid (e.g.\ greedy instead of dynamic programming). This mirrors errors an experienced programmer might make on a tricky problem.

  \item[(TL)] \textbf{Time limit exceeded.}
        The approach is valid, but is inefficient, taking too long on the provided test cases. These errors are not ones that our method is specifically designed to catch, as it looks for correctness not efficiency.
\end{enumerate}

Table \ref{tab:dropped} reports six accuracy metrics after successively removing each error category (letters are cumulative; e.g., \textsc{O+HC} drops out-of-scope tasks and hard mistakes caused by missing a constraint). \emph{Threshold accuracies} are measured after clustering with $\tau$ tuned so that the returned-answer error rate does not exceed $0\%$, $1\%$, or $2\%$; tasks falling below the threshold are returned as \textsc{unknown}. \emph{Expected accuracy} is the average probability of emitting a correct program across 100 generations per task, while \emph{clustered accuracy} always returns the majority-cluster program with no thresholding. \emph{Maximum accuracy} is an oracle upper bound that counts a task as solved if any of the 100 generations is correct.

Comparing the rows shows how aggressive filtering tightens guarantees. Dropping only out-of-scope tasks already lifts $\tau_{2\%}$ accuracy above the expected baseline, offering better precision with just a $2\%$ residual error. Achieving the same edge over expected accuracy at $\tau_{1\%}$ requires removing tasks where the model ignores a constraint (\textsc{HC}); a zero-error policy demands also eliminating wrong-algorithm cases (\textsc{HA}). These observations confirm that residual errors originate from semantic misunderstanding—either of the prompt (HC) or of the algorithmic requirements (HA).  Random hallucinations and syntactic slips are effectively eliminated by the verifier; addressing the remaining hard cases will require better specification clarity or interactive disambiguation.

\begin{table}
\centering
\caption{LiveCodeBench accuracy after removing successive error categories. The full table of results containing all combinations of error categories can be found in Appendix \ref{app:dropped_full}.}
\label{tab:dropped}
\begin{tabular}{lcccccc}
\toprule
\textbf{Dropped set (remaining)} &
\textbf{$\tau_{0\%}$} &
\textbf{$\tau_{1\%}$} &
\textbf{$\tau_{2\%}$} &
\textbf{Expected} &
\textbf{Clustered} &
\textbf{Max} \\
\midrule
Baseline (219)          & 15.07 & 16.44 & 31.05 & 33.97 & 49.32 & 66.67 \\
Out of Scope (212)                 & 15.57 & 25.47 & \textbf{36.79} & 35.09 & 50.94 & 68.87 \\
O{+}HC (198)             & 27.27 & \textbf{39.39} & 39.90 & 37.21 & 54.55 & 71.21 \\
O+HC{+}HA (181)           & \textbf{43.09} & 45.86 & 50.83 & 40.25 & 59.67 & 74.59 \\
O+HC{+}HA{+}I{+}S (159)   & 67.30 & 67.30 & 67.30 & 44.53 & 67.30 & 78.62 \\
O+HC{+}HA{+}I{+}S{+}TL (107)   & 100 & 100 & 100 & 64.20 & 100 & 100 \\
\bottomrule
\end{tabular}
\end{table}

\section{Discussion}
\label{sec:discussion}
Functional clustering draws a sharp line between two error sources. Once candidate programs are grouped by exact I/O behaviour on an automatically generated test suite, random hallucinations nearly vanish. Almost every residual failure can now be traced to the model misreading, under-specifying, or over-constraining the natural-language prompt. Diversity among base models helps further: the few hard mistakes that remain generally occur when one LLM concentrates its probability mass on a single flawed interpretation, whereas an ensemble tends to disagree and thus abstains.

Specification refinements are an immediate lever. Rewriting a handful of ambiguous \textsc{HumanEval} tasks boosts both confidence and accuracy, echoing the way human programmers pose follow-up questions when requirements are vague. A natural next step is to let LLMs propose such clarifications automatically, or to iteratively rewrite the prompt until confidence stabilizes.  

Time-limit exceedance dominates the harder \textsc{LiveCodeBench} tasks. Because functional clustering already verifies behavioral equivalence, we can safely ask the model for optimized variants and accept them only if they match the reference on the full test suite and finish within budget, automating a significant slice of performance tuning.

\paragraph{Limitations}
The approach rests on two core assumptions: each task admits a single functionally correct equivalence class, and behavioral agreement on a finite, auto-generated test set is a faithful proxy for universal correctness.  Tasks with multiple valid outputs or hidden edge cases violate these assumptions; the verifier therefore fails on such inputs, and future work should investigate coverage-guided or property-based testing.

Reliability depends on the number of sampled programs $n$, the number of test inputs $m$, and the acceptance threshold $\tau$.  Smaller $n$ or $m$, or highly peaked sampling, weakens the Chernoff-style guarantees, whereas larger values raise wall time and cost roughly as $\mathcal{O}(nm)$. Parallelization, caching, and distilled student models help but do not remove the overhead.

Experiments are limited to Python problems from \textsc{HumanEval} and \textsc{LiveCodeBench} using two English-language LLMs. Extending to other programming languages will require language-specific sandboxes, new input generators, and fresh validation. Even with sandboxing, sophisticated escape or resource-exhaustion attacks remain possible and must be mitigated with stricter isolation. Because abstentions may correlate with under-represented domains, downstream systems should monitor for disparate coverage and adjust thresholds or training data accordingly.

\section{Conclusion}
We introduced \emph{functional clustering}, a lightweight wrapper that turns any black-box code LLM into a selective coder with rigorously bounded error. The method samples multiple candidate programs, groups them by exact I/O behavior on an automatically generated test suite, and interprets the empirical mass of the dominant equivalence class as a confidence score.  A single scalar threshold allows users to trade coverage for reliability: raising the threshold shrinks the fraction of tasks answered but drives the residual error rate exponentially toward zero, in line with our Chernoff-style analysis.

On \textsc{LiveCodeBench}, the wrapper preserves baseline pass@1 on solvable tasks yet cuts the error rate of returned answers from roughly 65\% to under 2\%. At a more conservative threshold the error rate falls to 0\% while still solving 15.6\% of benchmark problems. Manual audits reveal that the remaining failures stem from prompt misinterpretation rather than random hallucination, narrowing future work to specification clarity and cross-model consensus.

Unlike prior methods that rely on token-level likelihoods, embedding distances, or privileged logits, functional clustering requires only two black-box capabilities—generating code and running it—making it immediately applicable to closed APIs and future models. Because the same test-oracle infrastructure can validate optimized rewrites, incremental debugging, or ensemble voting, we view functional clustering as a modular building block for dependable, autonomous software pipelines.

\bibliographystyle{plainnat}
\bibliography{references}         


\appendix
\setcounter{table}{0}
\renewcommand{\thetable}{A\arabic{table}}

\section{Proof of Equation \ref{eq:chernoff}}
\label{app:chernoff-proof}

\begin{theorem}[Chernoff–style upper bound]
Let $X_1,\dots,X_n\stackrel{\text{i.i.d.}}{\sim}\mathrm{Bernoulli}(C)$ with
$C\in(0,1)$, and let $\hat C_n=\frac1n\sum_{i=1}^{n}X_i$.
For any threshold $\tau\in(C,1)$,
\begin{equation*}
\Pr\!\bigl[\hat C_n\ge\tau\bigr]
\;\;\le\;\;
\exp\!\bigl[-n\,D_{\mathrm{KL}}(\tau\;\|\;C)\bigr],
\end{equation*}
where $D_{\mathrm{KL}}(\tau\;\|\;C)
      =\tau\ln\!\frac{\tau}{C}+(1-\tau)\ln\!\frac{1-\tau}{1-C}$ is the binary
Kullback–Leibler divergence.
\end{theorem}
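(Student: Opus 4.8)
The plan is to use the classical exponential-moment (Chernoff) method: control the upper tail of a sum of independent Bernoullis via an exponential Markov inequality and then optimize the free tilting parameter. Concretely, write $S_n=\sum_{i=1}^{n}X_i$, so that $\{\hat C_n\ge\tau\}=\{S_n\ge n\tau\}$. For any $\lambda>0$, Markov's inequality applied to the nonnegative random variable $e^{\lambda S_n}$ gives
\begin{equation*}
\Pr[S_n\ge n\tau]\;\le\;e^{-\lambda n\tau}\,\mathbb E\bigl[e^{\lambda S_n}\bigr].
\end{equation*}
Because the $X_i$ are i.i.d.\ $\mathrm{Bernoulli}(C)$, the moment generating function factorizes, $\mathbb E\bigl[e^{\lambda S_n}\bigr]=\bigl(1-C+Ce^{\lambda}\bigr)^n$, so the tail is at most $\bigl(e^{-\lambda\tau}(1-C+Ce^{\lambda})\bigr)^n$. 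It then remains to minimize the per-sample factor $\psi(\lambda)=e^{-\lambda\tau}\bigl(1-C+Ce^{\lambda}\bigr)$ over $\lambda>0$.

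Next I would carry out the optimization. Setting the derivative of $\ln\psi(\lambda)=-\lambda\tau+\ln\bigl(1-C+Ce^{\lambda}\bigr)$ to zero yields $\dfrac{Ce^{\lambda}}{1-C+Ce^{\lambda}}=\tau$, i.e.\ $e^{\lambda^\star}=\dfrac{\tau(1-C)}{C(1-\tau)}$. Since $\tau>C$, this quantity exceeds $1$, so $\lambda^\star>0$ lies in the admissible range; moreover $\ln\psi$ is convex in $\lambda$ (it is the cumulant generating function of a Bernoulli plus a linear term), so $\lambda^\star$ is the global minimizer. Substituting $\lambda^\star$ back, one computes $1-C+Ce^{\lambda^\star}=\dfrac{1-C}{1-\tau}$ and $e^{-\lambda^\star\tau}=\bigl(\tfrac{C(1-\tau)}{\tau(1-C)}\bigr)^{\tau}$, so that
\begin{equation*}
\psi(\lambda^\star)=\Bigl(\tfrac{C}{\tau}\Bigr)^{\tau}\Bigl(\tfrac{1-C}{1-\tau}\Bigr)^{1-\tau}=\exp\!\bigl[-D_{\mathrm{KL}}(\tau\|C)\bigr].
\end{equation*}
Raising to the $n$-th power gives exactly the claimed bound $\Pr[\hat C_n\ge\tau]\le\exp\!\bigl[-n\,D_{\mathrm{KL}}(\tau\|C)\bigr]$.

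I do not anticipate a genuine obstacle: this is the textbook Chernoff–Hoeffding tail bound specialized to Bernoulli summands, whose rate function is precisely the binary KL divergence. The only parts requiring care are bookkeeping ones — verifying that the stationary point $\lambda^\star$ is positive, which is exactly where the hypothesis $\tau>C$ enters (for $\tau\le C$ the optimal tilt is nonpositive and the statement degenerates), confirming convexity so that the first-order condition really identifies the minimum, and performing the algebraic simplification that collapses $e^{-\lambda^\star\tau}\bigl(1-C+Ce^{\lambda^\star}\bigr)$ into the compact exponential of $D_{\mathrm{KL}}(\tau\|C)$. A shorter alternative is to invoke the Chernoff–Hoeffding theorem (equivalently Sanov's theorem over a two-point alphabet) directly and observe that its exponent is $D_{\mathrm{KL}}(\tau\|C)$; I would nonetheless include the self-contained derivation above for completeness.
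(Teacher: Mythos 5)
Your proof is correct and follows essentially the same route as the paper's: Markov's inequality applied to $e^{\lambda S_n}$, factorization of the Bernoulli moment generating function, and optimization over $\lambda$ yielding $e^{\lambda^\star}=\tau(1-C)/(C(1-\tau))$ and the exponent $-D_{\mathrm{KL}}(\tau\|C)$. Your added checks that $\lambda^\star>0$ (using $\tau>C$) and that convexity certifies the stationary point as the global minimizer are minor refinements the paper omits, but the argument is the same.
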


\begin{proof}
Write $S_n=\sum_{i=1}^{n}X_i=n\hat C_n$.
For any $\lambda>0$, Markov’s inequality gives
\begin{align}
\Pr[S_n\ge n\tau]
&\;=\;\Pr\!\bigl[e^{\lambda S_n}\ge e^{\lambda n\tau}\bigr]
      \;\le\;
      e^{-\lambda n\tau}\,
      \mathbb E\!\bigl[e^{\lambda S_n}\bigr].
\label{eq:chernoff-markov}
\end{align}
Because the $X_i$ are independent, the moment–generating function factorizes:
\[
\mathbb E\!\bigl[e^{\lambda S_n}\bigr]
=\prod_{i=1}^{n}\mathbb E[e^{\lambda X_i}]
=\bigl((1-C)+C\,e^{\lambda}\bigr)^{n}.
\]
Substituting into~\eqref{eq:chernoff-markov} and taking logarithms,
\[
\frac1n\ln\Pr[S_n\ge n\tau]
\;\le\;
-\lambda\tau+\ln\!\bigl((1-C)+C\,e^{\lambda}\bigr).
\]
The tightest bound arises by minimizing the right-hand side over $\lambda>0$.
Denote
\(
g(\lambda)= -\lambda\tau+\ln\!\bigl((1-C)+C\,e^{\lambda}\bigr).
\)
Setting $g'(\lambda)=0$ yields
\[
 -\tau+\frac{C\,e^{\lambda_\star}}{(1-C)+C\,e^{\lambda_\star}}=0
 \;\;\Longrightarrow\;\;
 e^{\lambda_\star}=\frac{\tau(1-C)}{C(1-\tau)}.
\]
Plugging $\lambda_\star$ back into $g(\lambda)$ gives
\[
\min_{\lambda>0}g(\lambda)
= -\tau\ln\!\frac{\tau(1-C)}{C(1-\tau)}
  +\ln\!\Bigl(\frac{1-\tau}{1-C}+\frac{\tau}{1-C}\Bigr)
= -D_{\mathrm{KL}}(\tau\;\|\;C).
\]
Exponentiating both sides and multiplying by $n$ recovers
\(
\Pr[S_n\ge n\tau]\le\exp[-n\,D_{\mathrm{KL}}(\tau\;\|\;C)],
\)
which is exactly Eq. \ref{eq:chernoff}.
\end{proof}

\section{LLM prompts}\label{app:prompts}
When generating function completions, we require the response to be in a specific format. For each model the format is slightly different, so we use the system prompt to describe the format of the response to the model. The following is the system prompt for function completions for GPT-4o:

\begin{lstlisting}
You are an expert Python programmer and coding assistant. Your
task is to solve the given problem in Python, providing both a
detailed explanation of your reasoning and the code. Think
through the problem step by step, considering any edge cases and
ensuring the code meets the requirements. If the problem contains
any examples, simulate running those examples against your code
to verify whether your reasoning about the problem is correct.
Make sure to not have any misunderstandings about the problem.
\end{lstlisting}

The following is the system prompt for Claude-3-Haiku:
\begin{lstlisting}
You are an expert Python programmer and coding assistant. Your
goal is to generate a response **strictly** in JSON format with
exactly two top-level fields: "explanation" and "code". Both
fields must be valid JSON strings that include all necessary
escape characters.

- The "explanation" field should provide a detailed, step-by-step
reasoning of how you derived your solution.
- The "code" field must contain **only** valid Python code that
can be copied and run directly in a Python file without
modifications or additional text. 
- Do not include any fields other than "explanation" and "code".
- Do not include any text before or after the JSON object. 
- Do not include markdown formatting (like triple backticks).

Your output should look like:

{
  "explanation": "...",
  "code": "..."
}

And nothing else.

Follow these instructions carefully to ensure the output is in
the correct format.
\end{lstlisting}

Additionally, we have prompts for each dataset. \textsc{HumanEval} and \textsc{LiveCodeBench} provide their tasks in different formats. For \textsc{HumanEval}, the task is provided as a docstring for a function that the model is expected to complete. For \textsc{LiveCodeBench}, the task is provided as a text description scraped from a competitive programming website. So, in the user prompt, we describe the task and provide the input and output formats that the model should follow. The following is the user prompt for \textsc{HumanEval}:

\begin{lstlisting}
Complete the function '{entry_point}' in the following code
snippet. Provide a detailed explanation of your reasoning in the
'explanation' field, and the complete code in the 'code' field.
Think carefully about the problem, considering edge cases and the
best approach to implement the function. If the code snippet 
contains any examples, think through those examples to verify 
whether your reasoning about the problem is correct. Use those 
examples to correct any misunderstandings you may have about the 
problem. Do not add new imports or define any new functions that 
were not included in the provided snippet. Output your response 
as a JSON object with fields 'explanation' and 'code'.

```{function_docstring}```
\end{lstlisting}

The following is the user prompt for \textsc{LiveCodeBench}:
\begin{lstlisting}
Write Python code to solve the following problem. Read from 
standard input and output to standard output. Provide a detailed 
explanation of your reasoning in the 'explanation' field, and the 
complete code in the 'code' field. Think carefully about the 
problem, considering edge cases and the best approach to 
implement the function. If the provided problem contains any 
examples, think through those examples to verify whether your 
reasoning about the problem is correct. Use those examples to 
correct any misunderstandings you may have about the problem. 
Output your response as a JSON object with fields 'explanation' 
and 'code'.

### Problem Statement:
{question_content}
\end{lstlisting}

We additionally prompt the LLM to generate test cases. In this case, we use a single LLM, GPT-4o. The following is the prompt used to generate test cases for \textsc{HumanEval}:

\begin{lstlisting}
Generate a comprehensive list of valid input test cases for the 
function '{entry_point}' in the following code. The test cases 
should cover all possible valid scenarios, including edge cases 
and typical use cases. Provide only the inputs to each test case. 
Each set of inputs should be a string that can be parsed with 
json.loads into a valid dictionary with the function parameter 
names as keys:

```{function_docstring}```
\end{lstlisting}

The following is the prompt used to generate test cases for \textsc{LiveCodeBench}:
\begin{lstlisting}
Generate a comprehensive list of valid input test cases for the 
given problem statement. The test cases should cover all possible
valid scenarios, including edge cases and typical use cases. 
Provide only the inputs to each test case. Each input should be a 
string in the provided test format that will be passed into a 
program through standard input.

### Problem Statement:
{question_content}
\end{lstlisting}

\section{Problem rewrites}\label{app:rewrites}
The following is a list of all rewritten problems from \textsc{HumanEval}. We attempt to make minimal rewrites that clarify a misunderstanding that the model has with the problem. After making these rewrites, the model is able to solve the problem correctly with high confidence.

\textsc{HumanEval/145}
\begin{lstlisting}
# original snippet
def order_by_points(nums):
    """
    Write a function which sorts the given list of integers
    in ascending order according to the sum of their digits.
    Note: if there are several items with similar sum of their
    digits, order them based on their index in original list.
    
    For example:
    >>> order_by_points([1, 11, -1, -11, -12]) == [-1, -11, 1, -12, 11]
    >>> order_by_points([]) == []
    """

# clarified version
def order_by_points(nums):
    """
    Write a function which sorts the given list of integers
    in ascending order according to the sum of their digits.
    **If an integer is negative, its first digit should be treated
    as negative and the remaining digits as positive.**
    Note: if there are several items with similar sum of their
    digits, order them based on their index in original list.
    
    For example:
    >>> order_by_points([1, 11, -1, -11, -12, -111]) == [-1, -11, 1, -12, -111, 11]
    >>> order_by_points([]) == []
    """
\end{lstlisting}

\textsc{HumanEval/127}
\begin{lstlisting}
# original snippet
def intersection(interval1, interval2):
    """
    You are given two intervals, where each interval is a pair of
    integers. For example, interval = (start, end) = (1, 2). The
    given intervals are closed which means that the interval
    (start, end) includes both start and end. For each given
    interval, it is assumed that its start is less or equal its
    end. Your task is to determine whether the length of
    intersection of these two intervals is a prime number.
    
    Example, the intersection of the intervals (1, 3), (2, 4) is
    (2, 3) which its length is 1, which not a prime number. If
    the length of the intersection is a prime number, return
    "YES", otherwise, return "NO". If the two intervals don't
    intersect, return "NO".
    
    [input/output] samples:
    intersection((1, 2), (2, 3)) ==> "NO"
    intersection((-1, 1), (0, 4)) ==> "NO"
    intersection((-3, -1), (-5, 5)) ==> "YES"
    """

# clarified version
def intersection(interval1, interval2):
    """
    You are given two intervals, where each interval is a pair of
    integers. For example, interval = (start, end) = (1, 2). The
    given intervals are closed which means that the interval
    (start, end) includes both start and end. For each given
    interval, it is assumed that its start is less or equal its
    end. **The length of an interval (a, b) is defined as b - a.**
    Your task is to determine whether the length of
    intersection of these two intervals is a prime number.

    Example: the intersection of the intervals (1, 3), (2, 4) is
    (2, 3), whose length is 3 - 2 = 1, which is not a prime
    number. If the length of the intersection is a prime number,
    return "YES", otherwise, return "NO". If the two intervals
    don't intersect, return "NO".

    [input/output] samples:
    intersection((1, 2), (2, 3)) ==> "NO"
    intersection((-1, 1), (0, 4)) ==> "NO"
    intersection((-3, -1), (-5, 5)) ==> "YES"
    """
\end{lstlisting}

\textsc{HumanEval/134}
\begin{lstlisting}
# original snippet
def check_if_last_char_is_a_letter(txt):
    """
    Create a function that returns True if the last character
    of a given string is an alphabetical character and is not
    a part of a word, and False otherwise.
    Note: "word" is a group of characters separated by space.
    
    Examples:
    check_if_last_char_is_a_letter("apple pie") => False
    check_if_last_char_is_a_letter("apple pi e") => True
    check_if_last_char_is_a_letter("apple pi e ") => False
    check_if_last_char_is_a_letter("") => False
    """

# clarified version
def check_if_last_char_is_a_letter(txt):
    """
    Create a function that returns True if the very last character
    of a given string is an alphabetical character and is not
    a part of a word, and False otherwise.
    Note: ""word"" is a group of characters separated by space.
    **Do not trim any trailing spaces.**

    Examples:
    check_if_last_char_is_a_letter("apple pie") => False
    check_if_last_char_is_a_letter("apple pi e") => True
    check_if_last_char_is_a_letter("apple pi e ") => False
    check_if_last_char_is_a_letter("") => False
    """
\end{lstlisting}

\textsc{HumanEval/160}
\begin{lstlisting}
# original snippet
def do_algebra(operator, operand):
    """
    Given two lists operator, and operand. The first list has
    basic algebra operations, and the second list is a list of
    integers. Use the two given lists to build the algebric 
    expression and return the evaluation of this expression.

    The basic algebra operations:
    Addition ( + ) 
    Subtraction ( - ) 
    Multiplication ( * ) 
    Floor division ( // ) 
    Exponentiation ( ** ) 

    Example:
    operator['+', '*', '-']
    array = [2, 3, 4, 5]
    result = 2 + 3 * 4 - 5
    => result = 9

    Note:
        The length of operator list is equal to the length of 
        operand list minus one. Operand is a list of of non-
        negative integers. Operator list has at least one
        operator, and operand list has at least two operands.
    """

# clarified version
def do_algebra(operator, operand):
    """
    Given two lists operator, and operand. The first list has
    basic algebra operations, and the second list is a list of
    integers. Use the two given lists to build the algebric 
    expression and return the evaluation of this expression. **Do 
    not just apply each of the operations in the order they are
    given, make sure to keep order of operations in mind.**

    The basic algebra operations:
    Addition ( + ) 
    Subtraction ( - ) 
    Multiplication ( * ) 
    Floor division ( // ) 
    Exponentiation ( ** ) 

    Example:
    operator['+', '*', '-']
    array = [2, 3, 4, 5]
    result = 2 + 3 * 4 - 5
    => result = 9

    Note:
        The length of operator list is equal to the length of 
        operand list minus one. Operand is a list of of non-
        negative integers. Operator list has at least one
        operator, and operand list has at least two operands.
    """
\end{lstlisting}

\section{Dropped problem sets}\label{app:dropped_full}

Table \ref{tab:drop-results} shows the full table of accuracies. Out of scope (O) and Time limit exceeded (TL) errors are treated separate as both are not error categories that our method is meant to deal with. The other four categories are simple mistakes (S), incomplete response (I), missing constraint (HC), and wrong algorithm (HA). For these categories, all combinations of the categories are shown to be removed.

\begin{table}
\centering
\caption{Full table of LiveCodeBench accuracies after removing successive error categories.}
\label{tab:drop-results}
\begin{tabular}{lcccccc}
\toprule
\textbf{Dropped set (remaining)} &
\textbf{$\tau_{0\%}$} &
\textbf{$\tau_{1\%}$} &
\textbf{$\tau_{2\%}$} &
\textbf{Expected} &
\textbf{Clustered} &
\textbf{Max} \\
\midrule
Baseline (219)          & 15.07 & 16.44 & 31.05 & 33.97 & 49.32 & 66.67 \\
O (212)                 & 15.57 & 25.47 & \textbf{36.79} & 35.09 & 50.94 & 68.87 \\
O{+}S (202)             & 16.34 & 26.73 & 38.61 & 36.14 & 52.97 & 68.81 \\
O{+}I (200)             & 16.50 & 27.00 & 39.00 & 36.86 & 54.00 & 71.50 \\
O{+}HC (198)             & 27.27 & \textbf{39.39} & 39.90 & 37.21 & 54.55 & 71.21 \\
O{+}HA (195)             & 16.92 & 18.46 & 40.00 & 37.72 & 55.38 & 71.79 \\
O+I{+}S (190)           & 17.37 & 18.95 & 35.79 & 38.07 & 56.32 & 71.58 \\
O+S{+}HC (188)           & 28.72 & 41.49 & 43.62 & 38.46 & 56.91 & 71.28 \\
O+I{+}HC (186)           & 29.03 & 41.94 & 42.47 & 39.25 & 58.06 & 74.19 \\
O+S{+}HA (185)           & 17.84 & 19.46 & 44.86 & 39.01 & 57.84 & 71.89 \\
O+I{+}HA (183)           & 18.03 & 19.67 & 42.62 & 39.83 & 59.02 & 74.86 \\
O+HC{+}HA (181)           & \textbf{43.09} & 45.86 & 50.83 & 40.25 & 59.67 & 74.59 \\
O+I{+}S{+}HC (176)       & 30.68 & 44.32 & 46.59 & 40.70 & 60.80 & 74.43 \\
O+I{+}S{+}HA (173)       & 19.08 & 20.81 & 50.29 & 41.33 & 61.85 & 75.14 \\
O+S{+}HC{+}HA (171)       & 48.54 & 57.89 & 60.23 & 41.79 & 62.57 & 74.85 \\
O+I{+}HC{+}HA (169)       & 46.15 & 53.25 & 55.62 & 42.71 & 63.91 & 78.11 \\
O+I{+}S{+}HC{+}HA (159)   & 67.30 & 67.30 & 67.30 & 44.53 & 67.30 & 78.62 \\
O+I{+}S{+}HC{+}HA{+}TL (107)   & 100 & 100 & 100 & 64.20 & 100 & 100 \\
\bottomrule
\end{tabular}
\end{table}

\end{document}